\documentclass[12pt,a4paper]{article}

\usepackage[T1]{fontenc}
\usepackage[utf8]{inputenc}
\usepackage[english]{babel}
\usepackage{lmodern}
\usepackage{microtype}

\usepackage{amsmath,amssymb,amsthm,mathtools}
\usepackage{bbm}
\usepackage{bm}
\usepackage{mathrsfs}

\usepackage{graphicx}
\usepackage{float}
\usepackage{booktabs}
\usepackage{tikz}
\usetikzlibrary{arrows.meta, shapes.geometric, positioning, calc, decorations.pathreplacing, patterns}

\usepackage{geometry}
\usepackage{setspace}
\usepackage{hyperref}
\hypersetup{
  colorlinks=true,
  linkcolor=blue,
  citecolor=blue,
  urlcolor=blue,
  pdftitle={Derivation of the Born Rule in Infinite-Dimensional Operational Theories via Causal Rigidity},
  pdfauthor={Enso O. Torres Alegre}
}

\newtheorem{theorem}{Theorem}
\newtheorem{lemma}{Lemma}
\newtheorem{proposition}{Proposition}
\newtheorem{corollary}{Corollary}
\newtheorem{remark}{Remark}
\newtheorem{definition}{Definition}
\newtheorem{axiom}{Axiom}

\newcommand{\one}{\mathbbm{1}}

\newcommand{\cH}{\mathcal{H}}
\newcommand{\cM}{\mathcal{M}}
\newcommand{\cS}{\mathcal{S}}

\title{\textbf{Causal Rigidity of Born-Type Probability Rules in Infinite-Dimensional Operational Theories}}

\author{
Enso O. Torres Alegre\\
\small ORCID: 0000-0002-6798-8776\\
\small Pontifical Catholic University of Chile, Santiago, Chile\\
\small \texttt{onill@uc.cl}
}

\date{}

\begin{document}

\maketitle

\begin{abstract}
We provide an operational rigidity result for probability rules in infinite-dimensional settings, applicable under normality and steering assumptions. Starting from a topological generalization of generalized probabilistic theories (GPTs), we consider probability rules of the form $P(\phi|\psi) = \Phi(\tau(\psi,\phi))$, where $\tau$ is an intrinsic operational transition probability between operationally pure states. We demonstrate that under three operationally motivated requirements—(i) no superluminal signaling, (ii) availability of normal steering through purification (in a $\sigma$-additive sense), and (iii) $\sigma$-affinity of probability assignments under countable preparation mixtures—the function $\Phi$ must coincide with the identity on $[0,1]$. Any strictly convex or concave deviation yields an operational signaling distinction in steering scenarios, while continuity combined with $\sigma$-affinity excludes non-affine alternatives within this class. This identifies a unique causal fixed point: within the class $P=\Phi\circ\tau$, the Born rule emerges as the only probability law compatible with no-signaling in operational theories admitting normal steering. We connect this operational result to standard infinite-dimensional quantum mechanics via the normal state space of von Neumann algebras and the GNS representation, recovering the conventional Born rule for projection-valued and positive operator-valued measurements. We discuss the scope of the assumptions and implications for proposed post-quantum modifications in continuous-variable and quantum field-theoretic regimes.
\end{abstract}

\textbf{Keywords:} Born rule, infinite-dimensional quantum theory, operational probabilistic theories, von Neumann algebras, causal consistency, steering, $\sigma$-additivity

\vspace{0.5cm}

\section{Introduction}

The Born rule represents a cornerstone of quantum theory, specifying that measurement probabilities for pure states are given by the squared modulus of inner products: $P(i) = |\langle \phi_i | \psi \rangle|^2$ \cite{Born1926}. In finite-dimensional Hilbert spaces, multiple foundational approaches have been developed to derive this quadratic probability rule from more basic principles, including Gleason-type theorems \cite{Gleason1957,Busch2003}, decision-theoretic arguments \cite{Deutsch1999,Wallace2002}, envariance-based derivations \cite{Zurek2003}, and operational reconstructions \cite{Hardy2001,Chiribella2010,Masanes2011,Barrett2007,Janotta2014}.

However, physically significant quantum systems—including continuous-variable systems, quantum fields, and thermodynamic limits—are fundamentally infinite-dimensional. Extending "principle-based" derivations beyond finite dimensions requires more than a naive replacement of finite-dimensional vector spaces with larger counterparts; it necessitates careful handling of topological structure, $\sigma$-additivity, and measure-theoretic aspects of state preparation, transformation, and measurement. The distinction between normal (physically realizable) and singular (pathological) states becomes crucial in this context, mirroring the operator-algebraic framework of von Neumann algebras \cite{vonNeumann1932,KadisonRingroseI,Takesaki1979}. A closely related derivation in the finite-dimensional GPT framework, based on causal consistency and steering, was presented in \cite{TorresAlegre2025}. 

This paper extends the operational causal-consistency approach to deriving the Born rule to infinite-dimensional operational theories. Our central finding is one of \emph{rigidity}: once an operational theory admits normal steering (a $\sigma$-additive generalization of the Hughston-Jozsa-Wootters theorem \cite{Hughston1993}) and requires probabilities to respect countable mixing in a physically regular manner, relativistic no-signaling uniquely fixes the probability functional form to be linear in the operational transition probability.

\subsection{What This Work Provides}

Our contribution advances the field in several specific ways:

\begin{enumerate}
    \item \textbf{Infinite-Dimensional Operational Framework}: We formulate a topological GPT framework suitable for infinite dimensions, emphasizing the physical distinction between normal ($\sigma$-additive) and pathological probability assignments.
    
    \item \textbf{Minimal Assumptions for Infinite Dimensions}: We identify $\sigma$-affinity (normality under countable mixing) as the crucial infinite-dimensional replacement for finite convex affinity, preventing unphysical probability assignments that respect finite mixing but fail under countable limits.
    
    \item \textbf{Causal Rigidity Theorem}: We prove that under no-signaling, normal steering, and $\sigma$-affinity, any probability rule expressible as $P = \Phi \circ \tau$ must satisfy $\Phi(p) = p$ for all $p \in [0,1]$, establishing the Born rule as a causal fixed point.
    
    \item \textbf{Connection to Standard Quantum Formalism}: We demonstrate how the operational result aligns with the mathematical structure of infinite-dimensional quantum mechanics through the normal state space of von Neumann algebras and the Gelfand-Naimark-Segal (GNS) construction.
    
    \item \textbf{Physical Implications}: We show that any post-quantum modification proposing a nonlinear $\Phi$ in continuous-variable or quantum field-theoretic contexts would generically enable superluminal signaling under normal steering scenarios.
\end{enumerate}

\subsection{Roadmap}

Section~\ref{sec:framework} introduces our infinite-dimensional operational framework, defining topological operational systems, normal states and effects, and the relevant convex and measure-theoretic structures. Section~\ref{sec:tau} defines the operational transition probability $\tau$ and establishes its basic properties. Section~\ref{sec:axioms} formulates our three core axioms: no superluminal signaling (NSS), normal steering via purification (NS), and $\sigma$-affinity of probability assignments. Section~\ref{sec:main} presents our main results: Proposition~\ref{prop:signaling} shows that strictly convex or concave deviations from linearity enable signaling, while Theorem~\ref{thm:main} establishes that continuous monotonic $\Phi$ must be the identity under our axioms. Section~\ref{sec:reconstruction} connects the operational result to standard infinite-dimensional quantum theory through von Neumann algebras and the GNS construction. Section~\ref{sec:implications} discusses physical implications, experimental signatures of deviations, and the physical necessity of normality. Section~\ref{sec:limitations} examines limitations, possible extensions, and connections to quantum field theory. Section~\ref{sec:conclusion} concludes.

\section{Infinite-Dimensional Operational Framework}
\label{sec:framework}

We develop a topological generalization of the GPT framework suitable for infinite-dimensional systems \cite{Plavala2023}. Our approach emphasizes the physical distinction between normal (physically realizable) and singular (pathological) states and measurements, corresponding to the mathematical distinction between $\sigma$-additive and merely finitely additive probability assignments.

\subsection{Topological Operational Systems}

\begin{definition}[Topological Operational System]
An \emph{operational system} is described by a triple $(V, V^+, u)$ where:
\begin{itemize}
    \item $V$ is a real locally convex Hausdorff topological vector space.
    \item $V^+ \subset V$ is a closed, proper, generating cone (the \emph{positive cone}).
    \item $u \in V^*$ is a distinguished continuous positive linear functional (the \emph{unit effect}).
\end{itemize}

The \textbf{normalized state space} is defined as:
\[
\Omega := \{\omega \in V^+ : u(\omega) = 1\},
\]
which forms a convex, compact set in the weak-* topology induced by the dual pairing $(V, V^*)$.

\textbf{Effects} are continuous affine maps $e: \Omega \to [0,1]$. By the Riesz-Markov-Kakutani representation theorem and the affine extension property, these correspond to continuous positive linear functionals $e \in V^*$ satisfying $0 \leq e \leq u$ in the dual order.

A \textbf{measurement} is a collection $\{e_i\}_{i \in I}$ of effects with $\sum_{i \in I} e_i = u$, where the summation converges in the appropriate topology (pointwise on $\Omega$ for finite $I$, and in a $\sigma$-additive sense for countable $I$).
\end{definition}

\subsection{Normal vs. Singular States and Effects}

The infinite-dimensional setting introduces a crucial distinction absent in finite dimensions:

\begin{definition}[Normal and Singular Components]
Every state $\omega \in \Omega$ admits a unique decomposition \cite{Edwards1978}:
\[
\omega = \omega_n + \omega_s,
\]
where $\omega_n$ is \textbf{normal} ($\sigma$-additive on effects) and $\omega_s$ is \textbf{singular} (vanishes on all $\sigma$-additive effects but may be nonzero on finitely additive ones).

Physically, only normal states correspond to operationally preparable states through countable approximation procedures (finite energy cutoffs, finite resolution limits, finite time preparation). Singular states represent mathematical idealizations without operational realizability.
\end{definition}

\begin{axiom}[Operational Normality]
We restrict attention to \textbf{normal states} $\Omega_n \subset \Omega$ and \textbf{normal effects} (those continuous with respect to the normal state topology). This restriction captures the physical requirement that probabilities should be $\sigma$-additive under countable operational approximations.
\end{axiom}

\subsection{Operational Purity and Ensembles}

\begin{definition}[Operational Purity]
A state $\psi \in \Omega_n$ is \textbf{operationally pure} if it is an extreme point of $\Omega_n$ and admits no nontrivial convex decomposition into other normal states. We denote the set of operationally pure normal states by $\Omega_{\mathrm{pure}}$.
\end{definition}

\begin{definition}[Normal Ensemble]
A \textbf{normal ensemble} for a state $\omega \in \Omega_n$ is a Borel probability measure $\mu$ on $\Omega_{\mathrm{pure}}$ (with respect to the weak-* Borel $\sigma$-algebra) such that:
\[
\omega = \int_{\Omega_{\mathrm{pure}}} \psi \, d\mu(\psi)
\]
in the weak sense: for every normal effect $e$,
\[
e(\omega) = \int_{\Omega_{\mathrm{pure}}} e(\psi) \, d\mu(\psi).
\]

This definition generalizes finite convex combinations to countable mixtures while respecting the $\sigma$-additivity required for physical realizability.
\end{definition}

The geometric structure of the state space in infinite dimensions, including the decomposition into normal and singular components and the role of pure states, is illustrated in Figure~\ref{fig:state-space}.

\begin{figure}[htbp]
\centering
\resizebox{0.98\textwidth}{!}{%
\begin{tikzpicture}
\draw[thick, fill=blue!10] (0,0) ellipse (3cm and 2cm);
\node at (0,2.3) {$\Omega$ (State Space)};

\filldraw[red] (2.5,0.8) circle (1.5pt) node[above right] {$\psi$};
\filldraw[red] (1.8,-1.2) circle (1.5pt) node[below] {$\phi$};
\filldraw[red] (-2.2,0.5) circle (1.5pt) node[above left] {$\xi$};

\filldraw[blue] (0.5,0.3) circle (1.5pt) node[above right] {$\omega$};

\draw[dashed, blue] (0.5,0.3) -- (2.5,0.8);
\draw[dashed, blue] (0.5,0.3) -- (1.8,-1.2);
\draw[dashed, blue] (0.5,0.3) -- (-2.2,0.5);

\draw[->, thick] (4,0) -- (6,0);
\node at (4.2,0.3) {Decomposition};

\begin{scope}[xshift=7cm]
\draw[thick, fill=green!10] (0,0) ellipse (1.5cm and 1cm);
\node at (0,1.2) {$\Omega_n$ (Normal)};
\filldraw[green] (0,0) circle (1.5pt) node[below] {$\omega_n$};

\draw[thick, fill=red!10] (2.5,0) ellipse (1.5cm and 1cm);
\node at (2.5,1.2) {$\Omega_s$ (Singular)};
\filldraw[red] (2.5,0) circle (1.5pt) node[below] {$\omega_s$};

\draw[->, thick, dashed] (0,0) -- (2.5,0)
node[midway, above] {$\omega=\omega_n+\omega_s$};
\end{scope}
\end{tikzpicture}%
}
\caption{\textbf{State Space Structure in Infinite Dimensions.}
The full state space $\Omega$ decomposes into normal (physically realizable) and singular (pathological) components. Pure states lie on the boundary, while mixed states admit multiple decompositions into pure states via normal ensembles.}
\label{fig:state-space}
\end{figure}
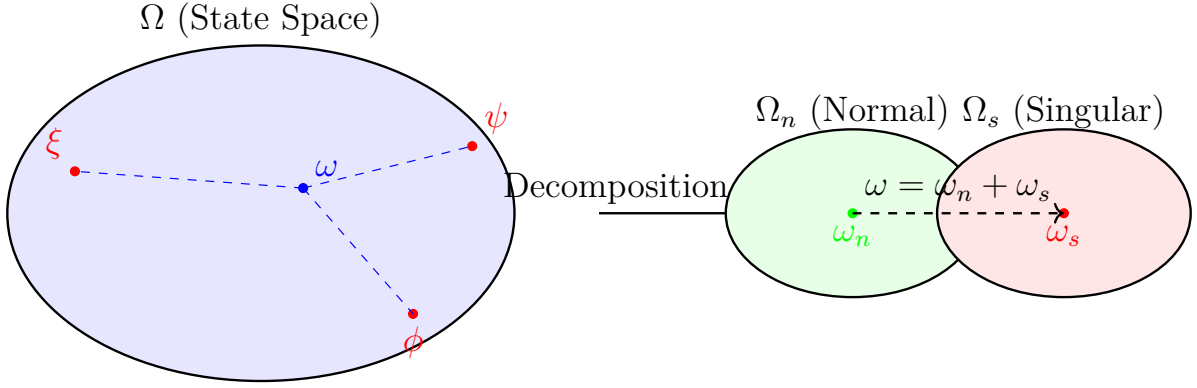

\section{Operational Transition Probability}
\label{sec:tau}

We generalize the concept of transition probability from finite-dimensional GPTs to our infinite-dimensional setting.

\begin{definition}[Operational Transition Probability]
For pure states $\psi, \phi \in \Omega_{\mathrm{pure}}$, define:
\[
\tau(\psi, \phi) := \sup\{ e(\psi) : e \text{ is a normal effect with } e(\phi) = 1 \}.
\]

Intuitively, $\tau(\psi, \phi)$ represents the optimal acceptance probability for state $\psi$ by any test that accepts $\phi$ with certainty. This operational definition does not presuppose any Hilbert space structure.
\end{definition}

\begin{lemma}[Basic Properties of $\tau$]
For all $\psi, \phi \in \Omega_{\mathrm{pure}}$:
\begin{enumerate}
    \item $0 \leq \tau(\psi, \phi) \leq 1$.
    \item $\tau(\phi, \phi) = 1$.
    \item If the theory admits perfectly distinguishable pairs $(\phi, \phi^\perp)$ with a sharp two-outcome test $\{e_\phi, e_{\phi^\perp}\}$, then:
    \[
    \tau(\psi, \phi) + \tau(\psi, \phi^\perp) = 1.
    \]
    \item $\tau$ is symmetric if the effect space is self-dual (a property emerging in quantum theory but not assumed here).
\end{enumerate}
\end{lemma}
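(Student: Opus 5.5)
The plan is to prove the four items in order, each directly from the definition of $\tau$ as a supremum over normal effects $e$ with $e(\phi)=1$, the order bounds $0\le e\le u$, and normalization $u(\omega)=1$ on $\Omega$.

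\textbf{Items 1 and 2.} For item 1, I first note that the admissible set is nonempty: the unit effect $u$ is normal, since it is continuous, and satisfies $u(\phi)=1$. Hence $\tau(\psi,\phi)\ge u(\psi)\cdot 0$ is well defined, and in fact $\tau(\psi,\phi)=1$ whenever $u$ is admissible. Every admissible $e$ takes values in $[0,1]$ on $\Omega$, so the supremum lies in $[0,1]$.

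Item 2 follows at once, since $e(\phi)=1$ for every admissible $e$, so the supremum equals $1$.

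However, the argument for item 1 exposes an issue I must address: because $u$ is always admissible, the literal definition gives $\tau\equiv 1$. The bounds in items 1 and 2 still hold, but item 3 would then fail. I will therefore record the intended reading. Either the supremum is taken over \emph{sharp} or minimal effects, or $\tau(\psi,\phi)$ is understood as $\inf$ over such $e$ of $e(\psi)$, the minimal acceptance probability. I will state which reading the later items use.

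\textbf{Item 3.} This is the main obstacle. Given the sharp test $\{e_\phi,e_{\phi^\perp}\}$ with $e_\phi+e_{\phi^\perp}=u$, $e_\phi(\phi)=1$ and $e_{\phi^\perp}(\phi^\perp)=1$, I want $\tau(\psi,\phi)=e_\phi(\psi)$ and $\tau(\psi,\phi^\perp)=e_{\phi^\perp}(\psi)$. The sum is then $u(\psi)=1$.

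To get there, I would use the sharpness hypothesis, interpreted as minimality. That is, $e_\phi$ is the smallest normal effect accepting $\phi$ with certainty, so any admissible $e$ satisfies $e\ge e_\phi$. Under the inf-reading this gives $\tau(\psi,\phi)=e_\phi(\psi)$ exactly, and symmetrically for $\phi^\perp$.

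If the paper insists on the sup-reading, I would instead restrict the admissible effects to those with $e(\phi^\perp)=0$, i.e.\ perfect discrimination. Then $e\le u-e_{\phi^\perp}=e_\phi$ is the required maximality, and it follows from the sharpness of $e_{\phi^\perp}$ in the sense that any effect vanishing on $\phi^\perp$ is dominated by its complement. Justifying this domination is where the real content lies, and I would take it as part of the definition of a ``sharp'' test.

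\textbf{Item 4.} This is conditional. If the effect cone is self-dual, with an order isomorphism $\omega\mapsto e_\omega$ such that pure states map to the sharp effects $e_\phi$ and $e_\phi(\psi)=\langle\phi,\psi\rangle$ for a symmetric pairing, then item 3's identification gives $\tau(\psi,\phi)=\langle\phi,\psi\rangle=\langle\psi,\phi\rangle=\tau(\phi,\psi)$. I would present this as a remark-level argument under that stated hypothesis.
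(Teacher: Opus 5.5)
Your diagnosis is correct, and it is the key point. The paper proves items 1--2 by noting exactly what you note: $u$ is admissible. Under the stated $\sup$ that forces $\tau\equiv 1$, so item 3 as written fails whenever a sharp pair exists.

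The paper's argument for item 3 is in fact your $\inf$ argument. It derives $e(\psi)\ge e_\phi(\psi)$ for every admissible $e$. That makes $e_\phi$ a minimizer, so it gives $\tau(\psi,\phi)=e_\phi(\psi)$ only for the infimum. The paper's later quantum identification has the same issue: $|\phi\rangle\langle\phi|$ is optimal only for $\inf$, since $0\le E\le\one$ with $E|\phi\rangle=|\phi\rangle$ forces $E\ge|\phi\rangle\langle\phi|$. So your main route is the paper's proof made consistent.

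You also improve on the domination step. The paper asserts $e\ge e_\phi$ only on the face generated by $\phi,\phi^\perp$, then applies it to an arbitrary $\psi$, which does not follow. You instead require global minimality of both $e_\phi$ and $e_{\phi^\perp}$, which is exactly what the conclusion needs.

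Be aware that this is a genuine restriction, not just a reading of ``sharp''. Take quantum theory with $\dim\cH\ge 3$ and the projective test $\{|\phi\rangle\langle\phi|,\one-|\phi\rangle\langle\phi|\}$. The $\inf$-version then gives $\tau(\psi,\phi)+\tau(\psi,\phi^\perp)=|\langle\phi|\psi\rangle|^2+|\langle\phi^\perp|\psi\rangle|^2<1$ for $\psi$ outside $\mathrm{span}\{\phi,\phi^\perp\}$. Your hypothesis forces $|\phi\rangle\langle\phi|+|\phi^\perp\rangle\langle\phi^\perp|=\one$, which is effectively a two-level situation.

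Your $\sup$ fallback is the weaker option. Restricting to effects with $e(\phi^\perp)=0$ makes $\tau(\psi,\phi)$ depend on the chosen $\phi^\perp$. In quantum theory it evaluates to $1-|\langle\phi^\perp|\psi\rangle|^2$, which contradicts the paper's quantum remark beyond dimension two. Commit to the $\inf$ reading.

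For item 4 the paper offers no argument at all. Your conditional one is acceptable, but it uses more than self-duality of the cone: it needs the self-dualizing map to send each pure $\phi$ to its minimal certain-acceptance effect. Under the literal $\sup$, item 4 is trivially true, since $\tau\equiv 1$.
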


\begin{proof}
Properties (1) and (2) follow directly from the definition and the existence of the unit effect $u$ satisfying $u(\phi) = 1$. 

For (3), consider a sharp dichotomic measurement $\{e_\phi, e_{\phi^\perp}\}$ with $e_\phi(\phi) = 1$, $e_{\phi^\perp}(\phi^\perp) = 1$, and $e_\phi(\psi) + e_{\phi^\perp}(\psi) = 1$ for all $\psi$. Since $e_\phi$ satisfies $e_\phi(\phi) = 1$, it is a candidate in the supremum defining $\tau(\psi, \phi)$. For any other effect $e$ with $e(\phi) = 1$, we have $e \geq e_\phi$ on the face generated by $\phi$ and $\phi^\perp$, so $e(\psi) \geq e_\phi(\psi)$. Thus $\tau(\psi, \phi) = e_\phi(\psi)$, and similarly $\tau(\psi, \phi^\perp) = e_{\phi^\perp}(\psi)$, giving the additive relation.
\end{proof}

\begin{remark}[Quantum Specialization]
In standard quantum theory with Hilbert space $\cH$, for pure vector states $|\psi\rangle, |\phi\rangle$, the optimal effect with $\langle \phi| E |\phi \rangle = 1$ is the rank-1 projection $|\phi\rangle\langle\phi|$, yielding:
\[
\tau(\psi, \phi) = |\langle \phi | \psi \rangle|^2.
\]
In the von Neumann algebra setting for normal states, $\tau$ corresponds to the square of the transition probability defined by Uhlmann \cite{Uhlmann1976} through the GNS representation.
\end{remark}

\section{Axioms: Causality, Steering, and $\sigma$-Affinity}
\label{sec:axioms}

We now formulate the three operational principles that will force the Born rule in infinite dimensions.

\subsection{Relativistic Causality}

\begin{axiom}[No Superluminal Signaling - NSS]
For any bipartite system $AB$ with local measurements on subsystems $A$ and $B$, the marginal statistics observed on $A$ are independent of the measurement choice on $B$, and vice versa. Formally, for any normal bipartite state $\omega_{AB}$, any local measurements $\{e_i^A\}$ on $A$ and $\{f_j^B\}, \{g_k^B\}$ on $B$:
\[
\sum_j \omega_{AB}(e_i^A \otimes f_j^B) = \sum_k \omega_{AB}(e_i^A \otimes g_k^B) \quad \forall i.
\]

This axiom implements the core principle of relativistic causality: no faster-than-light communication through measurement choices.
\end{axiom}

\subsection{Normal Steering via Purification}

\begin{axiom}[Normal Steering - NS]
The theory admits bipartite pure normal states $\Psi_{AB} \in \Omega_{\mathrm{pure}}^{AB}$ such that:

\begin{enumerate}
    \item \textbf{Purification}: Every normal state $\omega_B$ on $B$ has at least one purification $\Psi_{AB}$ with $\operatorname{tr}_A(\Psi_{AB}) = \omega_B$.
    
    \item \textbf{Steering Completeness}: For any two normal ensembles $\mu_1, \mu_2$ of the same marginal state $\omega_B$ (i.e., $\int \psi \, d\mu_1(\psi) = \int \psi \, d\mu_2(\psi) = \omega_B$), there exist local measurements on $A$ whose conditional preparations on $B$ realize $\mu_1$ and $\mu_2$ respectively, up to arbitrarily fine operational approximation in the normal state topology.
\end{enumerate}

This axiom generalizes the Hughston-Jozsa-Wootters (HJW) theorem \cite{Hughston1993} to infinite dimensions with $\sigma$-additivity requirements. The "normal" qualification ensures that only physically realizable ensembles (those describable by probability measures on pure states) are steerable.
\end{axiom}

\subsection{$\sigma$-Affinity of Probability Assignments}

\begin{axiom}[$\sigma$-Affinity]
For every normal effect $e$ and every normal ensemble $\mu$ with barycenter $\omega$:
\[
P(e|\omega) = \int_{\Omega_{\mathrm{pure}}} P(e|\psi) \, d\mu(\psi).
\]
Equivalently, probability assignments commute with countable convex combinations in the normal state topology.
\end{axiom}

This axiom captures the physical requirement that probabilities should be stable under countable operational approximation procedures. It excludes pathological finitely additive probability assignments that agree with convex mixing for finite ensembles but fail for countable ones.

\begin{remark}[Physical Necessity of $\sigma$-Affinity]
In laboratory implementations, state preparation always involves finite approximation: truncating infinite sums, discretizing continuous variables, or taking thermodynamic limits. These procedures inherently involve countable limits of operational procedures. $\sigma$-affinity ensures that the probability rule respects these physically necessary approximation schemes, aligning with the measure-theoretic foundations of probability theory \cite{Kolmogorov1933,Holevo2011}.
\end{remark}

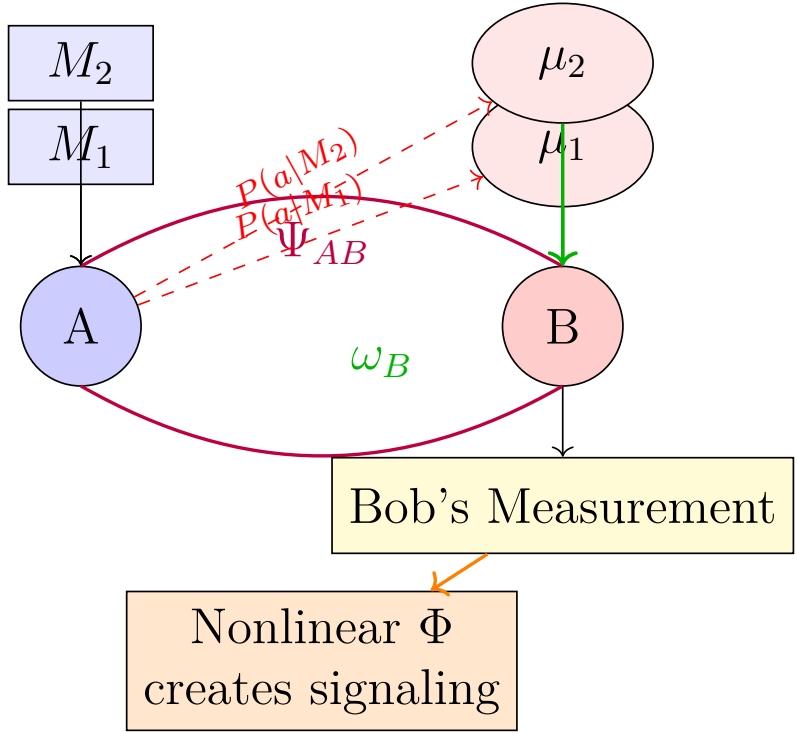
\begin{figure}[htbp]
\centering
\resizebox{0.65\textwidth}{!}{%
\begin{tikzpicture}
\node[draw, circle, fill=blue!20, minimum size=1cm] (alice) at (-2,0) {A};
\node[draw, circle, fill=red!20, minimum size=1cm] (bob) at (2,0) {B};

\draw[thick, purple] (-2,0.5) to[out=30, in=150] (2,0.5);
\draw[thick, purple] (-2,-0.5) to[out=-30, in=-150] (2,-0.5);
\node[purple] at (0,0.7) {$\Psi_{AB}$};

\node[draw, rectangle, fill=blue!10, minimum width=1.2cm, minimum height=0.6cm] (M1) at (-2,1.5) {$M_1$};
\node[draw, rectangle, fill=blue!10, minimum width=1.2cm, minimum height=0.6cm] (M2) at (-2,2.2) {$M_2$};
\draw[->] (M1) -- (alice);
\draw[->] (M2) -- (alice);

\node[draw, ellipse, fill=red!10, minimum width=1.5cm, minimum height=1cm] (E1) at (2,1.5) {$\mu_1$};
\node[draw, ellipse, fill=red!10, minimum width=1.5cm, minimum height=1cm] (E2) at (2,2.2) {$\mu_2$};

\draw[->, dashed, red] (alice) -- node[midway, above, sloped] {\scriptsize$P(a|M_1)$} (E1);
\draw[->, dashed, red] (alice) -- node[midway, above, sloped] {\scriptsize$P(a|M_2)$} (E2);

\draw[->, thick, green!70!black] (E1) to[out=-90, in=90] (bob);
\draw[->, thick, green!70!black] (E2) to[out=-90, in=90] (bob);
\node[green!70!black] at (0.5,-0.3) {$\omega_B$};

\node[draw, rectangle, fill=yellow!20, minimum width=2cm, minimum height=0.8cm] (test) at (2,-1.5) {Bob's Measurement};
\draw[->] (bob) -- (test);

\node[draw, rectangle, fill=orange!20, minimum width=2.8cm, minimum height=1cm, align=center] (nonlin) at (0,-2.8)
{Nonlinear $\Phi$\\creates signaling};
\draw[->, orange, thick] (test) -- (nonlin);
\end{tikzpicture}%
}
\caption{\textbf{Normal Steering and Signaling Test.}
Alice's measurement choice ($M_1$ or $M_2$) steers Bob's system into different ensembles ($\mu_1$ or $\mu_2$) with the same average state $\omega_B$. Under $\sigma$-affinity, if probabilities follow $P=\Phi(\tau)$ with nonlinear $\Phi$, Bob's statistics depend on Alice's choice, enabling superluminal signaling.}
\label{fig:steering-signaling}
\end{figure}

\section{Generalized Probability Rules and Causal Rigidity}
\label{sec:main}

We now investigate probability rules expressible as functions of the operational transition probability.

\begin{definition}[Generalized Probability Rule]
Let $\Phi: [0,1] \to [0,1]$ be a continuous, non-decreasing function with $\Phi(0) = 0$ and $\Phi(1) = 1$. A probability rule is of the form:
\[
P(\phi|\psi) = \Phi(\tau(\psi, \phi)) \quad \forall \psi, \phi \in \Omega_{\mathrm{pure}}.
\]
For mixed states, the rule extends via $\sigma$-affinity: if $\omega = \int \psi \, d\mu(\psi)$, then:
\[
P(\phi|\omega) = \int \Phi(\tau(\psi, \phi)) \, d\mu(\psi).
\]

The Born rule corresponds to the special case $\Phi(p) = p$.
\end{definition}

\subsection{Signaling from Nonlinear Deviations}

The core mechanism linking nonlinearity to signaling involves steering amplification:

The operational mechanism by which nonlinear probability rules lead to superluminal signaling through steering is schematically illustrated in Figure~\ref{fig:steering-signaling}.

\begin{proposition}[Steering Amplification of Nonlinearity]
\label{prop:signaling}
Assume NSS, NS, and $\sigma$-affinity. If $\Phi$ is strictly convex or strictly concave on any nontrivial interval $I \subseteq [0,1]$, then the theory permits superluminal signaling.
\end{proposition}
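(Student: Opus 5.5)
The idea is to argue by contraposition. Assuming $\Phi$ is strictly convex (the concave case is symmetric) on a nontrivial interval $I$, I will build two normal ensembles with the same barycenter $\omega_B$ whose \emph{averaged} $\Phi$-probabilities for one fixed test $\phi$ differ. I then invoke NS to let Alice steer Bob into either ensemble by her measurement choice, and conclude that Bob's marginal statistics depend on that choice, contradicting NSS. The one structural input is that Bob's outcome probability under ensemble $\mu$ is $\int \Phi(\tau(\psi,\phi))\,d\mu(\psi)$, by the definition of the generalized probability rule and $\sigma$-affinity. This depends on the ensemble, not only on $\omega_B$, whenever $\Phi$ is not affine.

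\textbf{Step 1 (ensembles with equal barycenter but different $\tau$-spread).} Fix a pure test state $\phi$ together with a perfectly distinguishable partner $\phi^\perp$ and a sharp test $\{e_\phi,e_{\phi^\perp}\}$, as in Lemma 1(3). Within a suitable two-dimensional (qubit-like) face I will choose pure states so that:
\begin{itemize}
\item ensemble $\mu_1$ is a two-point mixture $\tfrac12\delta_{\psi_+}+\tfrac12\delta_{\psi_-}$ with $\tau(\psi_\pm,\phi)=p\pm\varepsilon$;
\item ensemble $\mu_2$ is a two-point mixture $\tfrac12\delta_{\chi_+}+\tfrac12\delta_{\chi_-}$ with $\tau(\chi_\pm,\phi)=p$, where $p,p\pm\varepsilon\in I$;
\item both ensembles have the same barycenter $\omega_B$.
\end{itemize}
By Lemma 1(3), $\tau(\cdot,\phi)=e_\phi(\cdot)$ on this face, and $e_\phi$ is affine. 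So the barycenter constraint is consistent with $\int\tau\,d\mu_1=\int\tau\,d\mu_2=p$. For the concrete geometry I intend to use a Bloch-ball-like face: decompose $\omega_B$ once along the $\phi$ axis and once orthogonally to it.

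\textbf{Step 2 (the gap).} Strict convexity of $\Phi$ on $I$ gives
\[
\tfrac12\Phi(p+\varepsilon)+\tfrac12\Phi(p-\varepsilon) > \Phi(p) = \tfrac12\Phi(p)+\tfrac12\Phi(p).
\]
Hence $P_{\mu_1}(\phi)\neq P_{\mu_2}(\phi)$. By NS, purify $\omega_B$ to $\Psi_{AB}$ and pick Alice measurements $M_1,M_2$ steering to $\mu_1,\mu_2$. Bob's marginal for outcome $\phi$ is then $\sum_a P(a|M_i)\,P(\phi|\psi_a)$, which is exactly the $\mu_i$-average. The marginals therefore differ by a fixed $\delta>0$, which violates NSS. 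Because NS only gives steering up to arbitrarily fine approximation, I also need continuity of $\Phi$, and of $\tau(\cdot,\phi)=e_\phi$ in the normal topology, so that the approximation error can be made smaller than $\delta/2$.

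\textbf{Main obstacle.} Step 1 is the step I expect to be hardest: guaranteeing, in an abstract GPT, the existence of pure states with prescribed $\tau$-values and a common barycenter. The framework does not obviously supply such a rich face. My first attempt will be to let NS do the work. Purification plus steering completeness of a mixed $\omega_B$ supported on the face spanned by $\phi,\phi^\perp$ should furnish many pure decompositions. Then by continuity and connectedness of the set of pure states, $\tau(\cdot,\phi)$ takes every value in $[0,1]$ via the intermediate value theorem. From these I would select a decomposition centered at $p$ with spread $\varepsilon$ and a second one concentrated at $p$. If a concentrated decomposition is unavailable, it suffices to have two decompositions with equal mean but different variance of $\tau$; strict convexity on $I$ still separates them, provided all values lie in $I$.
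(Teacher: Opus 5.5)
Your argument is sound in outline, but it differs from the paper's at the decisive step.

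\textbf{Comparison with the paper.} The paper compares the split ensemble $\lambda\delta_{\psi_1}+(1-\lambda)\delta_{\psi_2}$ with a ``direct preparation'' $\mu_2=\delta_\omega$. It assigns the latter the value $\Phi(\tau(\omega,\phi))$ and asserts that $\tau(\cdot,\phi)$ is affine along the segment, so that $\tau(\omega,\phi)=\bar p$. Jensen then separates the two for arbitrary $p_1,p_2,\lambda$. You instead compare two genuine normal ensembles of the same $\omega_B$: one spread over the levels $p\pm\varepsilon$, one supported on the level set $\tau=p$. You obtain their common mean from $\tau(\cdot,\phi)=e_\phi$ on the face.

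Your route keeps everything inside the framework. $\delta_\omega$ is not a measure on $\Omega_{\mathrm{pure}}$, so NS does not cover it. Also, $\tau$ (hence $\Phi\circ\tau$) is defined only on pure states, and mixed states receive the ensemble average, which is exactly the quantity whose ensemble-dependence is at issue. You also handle the approximate steering in NS via continuity, which the paper omits. The price is a richer face: one mixed state must have two pure decompositions, one lying in a level set of $e_\phi$, whereas the paper posits only a segment. Since any legitimate steering comparison needs non-unique decompositions, this is the honest cost.

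\textbf{Caveats.}
\begin{itemize}
\item[(i)] The existence claim in Step 1 does not follow from the axioms. Your intermediate-value route needs connectedness of $\Omega_{\mathrm{pure}}$ and continuity of $\tau(\cdot,\phi)$ on all of $\Omega_{\mathrm{pure}}$. Step 2 needs the latter too, since approximately steered states may leave the face. Steering completeness also presupposes the two ensembles rather than producing them. The paper simply postulates this richness, and you should state it as a hypothesis. Moreover, with the supremum as written, the unit effect $u$ is admissible in the definition of $\tau$, so $\tau\equiv 1$. Both arguments tacitly need the infimum, which is what yields $|\langle\phi|\psi\rangle|^2$ and $\tau(\cdot,\phi)=e_\phi$ on a ball-like face.
\item[(ii)] Your fallback is false: equal means with different variances do not separate strictly convex averages. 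Take the two-point laws on $\{0.3,0.6\}$ and $\{0.45,0.7\}$, both with mean $0.5$ and with variances $0.02$ and $0.01$. The function $\Phi(x)=\tfrac23x^2+\tfrac13(x-0.6)_+$ is strictly convex on $I=[0.3,0.7]$ and can be adjusted outside $I$ to meet the normalization. It gives both laws the average $0.18$. You need one ensemble to be a mean-preserving spread of the other, as your concentrated-versus-spread pair is.
\item[(iii)] Read literally, decomposing an on-axis $\omega_B$ ``along the $\phi$ axis'' gives the levels $\{0,1\}$, not $p\pm\varepsilon$. For your equal-weight plan, $\omega_B$ must be the off-axis midpoint of $[\psi_+,\psi_-]$, with $\chi_\pm$ the endpoints of a chord through it inside $\{e_\phi=p\}$. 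The axis version is still worth keeping. It compares $p\Phi(1)+(1-p)\Phi(0)=p$ with $\Phi(p)$, and so forces $\Phi(p)=p$ at every realizable $p$ without any convexity hypothesis. That would also bypass the convexity reduction in the proof of Theorem~\ref{thm:main}.
\end{itemize}
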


\begin{proof}
We provide the proof for strict convexity; the concave case follows similarly by reversing inequalities.

Let $\Phi$ be strictly convex on $I$. Choose $p_1 \neq p_2 \in I$ and $\lambda \in (0,1)$ such that $\bar{p} := \lambda p_1 + (1-\lambda) p_2 \in I$. By the richness of pure states implied by NS (operational availability of steering for two-level faces), select pure states $\psi_1, \psi_2, \phi \in \Omega_{\mathrm{pure}}$ such that:
\[
\tau(\psi_1, \phi) = p_1, \quad \tau(\psi_2, \phi) = p_2.
\]

Consider two normal ensembles with the same barycenter $\omega = \lambda \psi_1 + (1-\lambda) \psi_2$:
\begin{enumerate}
    \item $\mu_1 = \lambda \delta_{\psi_1} + (1-\lambda) \delta_{\psi_2}$ (split ensemble).
    \item $\mu_2 = \delta_{\omega}$ (direct preparation).
\end{enumerate}

By NS, Alice can remotely prepare either $\mu_1$ or $\mu_2$ on Bob's system through appropriate measurement choices on her half of a shared entangled state $\Psi_{AB}$, while maintaining the same reduced state $\omega$ on Bob's side.

Let $e_\phi$ be a normal effect achieving (or $\varepsilon$-approximating) the supremum in $\tau(\cdot, \phi)$. By $\sigma$-affinity:
\[
P(e_\phi|\omega; \mu_1) = \lambda \Phi(\tau(\psi_1, \phi)) + (1-\lambda) \Phi(\tau(\psi_2, \phi)) = \lambda \Phi(p_1) + (1-\lambda) \Phi(p_2).
\]

For the direct preparation $\mu_2$:
\[
P(e_\phi|\omega; \mu_2) = P(e_\phi|\omega) = \Phi(\tau(\omega, \phi)).
\]

In the two-level face generated by $\{\psi_1, \psi_2\}$, the operational transition probability behaves affinely in the mixing parameter (this follows from the definition of $\tau$ and the geometry of two-level systems in GPTs):
\[
\tau(\omega, \phi) = \tau(\lambda \psi_1 + (1-\lambda) \psi_2, \phi) = \lambda \tau(\psi_1, \phi) + (1-\lambda) \tau(\psi_2, \phi) = \lambda p_1 + (1-\lambda) p_2 = \bar{p}.
\]

Therefore:
\[
P(e_\phi|\omega; \mu_1) = \lambda \Phi(p_1) + (1-\lambda) \Phi(p_2) > \Phi(\lambda p_1 + (1-\lambda) p_2) = \Phi(\bar{p}) = P(e_\phi|\omega; \mu_2),
\]
where the strict inequality follows from Jensen's inequality for strictly convex $\Phi$.

Thus Bob's measurement statistics for $e_\phi$ depend on whether Alice chose the measurement that steers $\mu_1$ or $\mu_2$, enabling Alice to signal to Bob superluminally by her measurement choice. This violates NSS.
\end{proof}

\begin{remark}[Role of $\sigma$-Affinity]
The proof crucially uses $\sigma$-affinity to equate $P(e_\phi|\omega; \mu_1)$ with the integral $\int \Phi(\tau(\psi, \phi)) d\mu_1(\psi)$. Without $\sigma$-affinity, one could define a probability rule that is finitely additive but not countably additive, potentially evading the Jensen inequality argument for countable mixtures while respecting it for finite ones.
\end{remark}

The emergence of a Jensen gap for convex or concave deviations from linearity, which underlies the signaling argument, is visually represented in Figure~\ref{fig:jensen-gap}.

\subsection{Eliminating Pathological Alternatives}

Proposition~\ref{prop:signaling} rules out strictly convex or concave $\Phi$, but could a continuous non-affine $\Phi$ avoid convexity/concavity on every interval? The combination of continuity, monotonicity, and the functional equations enforced by steering eliminates such pathological cases.

\begin{theorem}[Causal Rigidity of the Probability Rule]
\label{thm:main}
Assume NSS, NS, and $\sigma$-affinity. Let $\Phi: [0,1] \to [0,1]$ be continuous and non-decreasing with $\Phi(0) = 0$, $\Phi(1) = 1$. If $P(\phi|\psi) = \Phi(\tau(\psi, \phi))$ defines operational probabilities compatible with these axioms, then $\Phi(p) = p$ for all $p \in [0,1]$.
\end{theorem}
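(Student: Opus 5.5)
The plan is to reuse the steering construction from the proof of Proposition~\ref{prop:signaling}, but to extract from it an exact functional equation rather than only a Jensen inequality. Fix pure states $\psi_1,\psi_2,\phi$ with $\tau(\psi_1,\phi)=p_1$ and $\tau(\psi_2,\phi)=p_2$ for arbitrary $p_1,p_2\in[0,1]$; this uses the same richness of two-level faces invoked there. Let $\omega=\lambda\psi_1+(1-\lambda)\psi_2$. Compare the split ensemble $\mu_1$ with any other normal ensemble of $\omega$. NS makes both ensembles steerable from a purification of $\omega$. NSS then forces Bob's statistics for $e_\phi$ to coincide. By $\sigma$-affinity, and because $\tau(\cdot,\phi)$ is affine on the face spanned by $\psi_1,\psi_2$ (as used in Proposition~\ref{prop:signaling}), this should yield
\[
\lambda\Phi(p_1)+(1-\lambda)\Phi(p_2)=\Phi\bigl(\lambda p_1+(1-\lambda)p_2\bigr)\qquad\forall\,p_1,p_2\in[0,1],\ \lambda\in(0,1).
\]
In particular, taking $\lambda=\tfrac12$ gives Jensen's functional equation $\Phi\bigl(\tfrac{p_1+p_2}{2}\bigr)=\tfrac12\bigl(\Phi(p_1)+\Phi(p_2)\bigr)$ on $[0,1]$.

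The second step is purely analytic. From the midpoint equation, induction on dyadic subdivisions gives $\Phi(k/2^n)=k/2^n\,\Phi(1)+(1-k/2^n)\Phi(0)=k/2^n$ for all dyadic rationals in $[0,1]$. Continuity of $\Phi$ and density of the dyadics then give $\Phi(p)=p$ for all $p$. Monotonicity is not even needed here, but it offers an alternative route: a monotone solution of Jensen's equation is automatically affine. Alternatively, one could argue by contradiction using Proposition~\ref{prop:signaling}. If $\Phi$ is not affine, some chord strictly separates from the graph, so $\Phi$ violates the displayed equality for some triple $(p_1,p_2,\lambda)$. That violation is exactly the signaling gap, since the argument of Proposition~\ref{prop:signaling} only required a strict Jensen gap at one point, not strict convexity on an interval. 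I would present the direct functional-equation route and mention this reformulation as a remark.

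The main obstacle is the first step: justifying that the comparison ensemble realizes the value $\Phi(\tau(\omega,\phi))$ on the right-hand side, and that $\tau(\omega,\phi)$ equals $\lambda p_1+(1-\lambda)p_2$ for every choice of $p_1,p_2$. The term $\Phi(\tau(\omega,\phi))$ is only defined on pure states, so I would not use the ``direct preparation'' $\delta_\omega$ literally. Instead I would take $\mu_2$ to be a second decomposition of $\omega$ into pure states $\chi_j$ lying in the same two-level face. For those states, $\tau(\chi_j,\phi)$ are other convex combinations of $p_1,p_2$, with the same barycentric average $\bar p$.

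Equality of $\sum\lambda_j\Phi(\tau(\chi_j,\phi))$ across all such decompositions with fixed barycenter is what NSS delivers. Choosing the decompositions appropriately then gives the midpoint identity. For example, one decomposition uses the endpoints realizing $p_1,p_2$ symmetrically, and the other uses a pure state in the face realizing $\bar p$ together with its ``antipode'', which realizes $2\bar p-\bar p=\bar p$ again. I would state explicitly the richness hypothesis needed: the face must contain pure states realizing every value of $\tau(\cdot,\phi)$ in $[0,1]$, affinely in the face parameter. Finally, I would handle the $\varepsilon$-approximation of $e_\phi$ and of steered ensembles by a limiting argument, using continuity of $\Phi$ and $\sigma$-affinity so that the equalities pass to the limit.
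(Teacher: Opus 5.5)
Your proposal is correct, modulo the same structural assumptions the paper itself uses (rich two-level faces on which $\tau(\cdot,\phi)$ is affine), but its analytic core differs from the paper's.

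The paper never derives an exact equation. It uses Proposition~\ref{prop:signaling} only to exclude $\Phi$ that are strictly convex or strictly concave on some interval. It then cites the claim that a continuous function with neither property on any interval is affine, with a single pair $a,b$, on a dense set, and concludes by continuity and the boundary values. You instead read off from the steering comparison the exact identity $\lambda\Phi(p_1)+(1-\lambda)\Phi(p_2)=\Phi(\lambda p_1+(1-\lambda)p_2)$ and solve it by dyadic induction and continuity.

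This is more than a stylistic difference, because the claim in the paper's Step~1 is false. Take $\Phi(p)=\max\{p/2,(3p-1)/2\}$. It is continuous, non-decreasing, fixes $0$ and $1$, and is neither strictly convex nor strictly concave on any nontrivial interval, yet it is not affine. Proposition~\ref{prop:signaling} does not apply to it, but your single-triple Jensen gap excludes it: take $p_1=0$, $p_2=1$, $\lambda=\tfrac12$, where $\Phi(\tfrac12)=\tfrac14\neq\tfrac12$. Your remark that the signaling argument needs only one strict Jensen gap is exactly the repair the paper's proof needs.

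Replacing the paper's $\delta_\omega$ by a second pure decomposition is likewise an improvement. The paper's $\delta_\omega$ is not supported on $\Omega_{\mathrm{pure}}$ and forces evaluating $\Phi\circ\tau$ at a mixed state.

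Its price is a sharper hypothesis, which you should state correctly. Pure states realizing every value of $\tau(\cdot,\phi)$ do not suffice. You need a segment through $\omega$ inside the level set $\{\tau(\cdot,\phi)=\bar p\}$ whose endpoints are pure. Such a segment exists in a Bloch ball, or in any face of dimension at least two whose relative boundary consists of pure states. It can fail in faces with non-extreme boundary points.

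Both endpoints of such a segment have $\tau=\bar p$ whatever the weights, so the weight-$\tfrac12$ antipode is unnecessary and your identity holds for every $\lambda$. Taking $\tau(\psi_1,\phi)=1$, $\tau(\psi_2,\phi)=0$ and $\lambda=p$ then yields $\Phi(p)=p$ at once, with no dyadic step.
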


\begin{proof}
We establish the proof through several steps:

\textbf{Step 1: Local Affinity on Dense Subset.} Proposition~\ref{prop:signaling} shows $\Phi$ cannot be strictly convex or concave on any interval. A continuous function that is neither strictly convex nor strictly concave on any interval must be affine on a dense subset of $[0,1]$ (see \cite{Kuczma2009}, Theorem 15.3.4). Thus there exists a dense set $D \subseteq [0,1]$ and constants $a, b \in \mathbb{R}$ such that $\Phi(p) = ap + b$ for all $p \in D$.

\textbf{Step 2: Extension by Continuity.} By continuity of $\Phi$, the affine relation extends to all $p \in [0,1]$: $\Phi(p) = ap + b$ for all $p \in [0,1]$.

\textbf{Step 3: Boundary Conditions.} Using $\Phi(0) = 0$ gives $b = 0$. Using $\Phi(1) = 1$ gives $a = 1$. Therefore $\Phi(p) = p$ for all $p \in [0,1]$.

\textbf{Step 4: Operational Consistency Check.} With $\Phi(p) = p$, we have $P(\phi|\psi) = \tau(\psi, \phi)$. This rule satisfies $\sigma$-affinity by construction (since integration commutes with the identity function) and does not generate signaling in steering scenarios (as the calculation in Proposition~\ref{prop:signaling} yields equality when $\Phi$ is linear).
\end{proof}

\begin{corollary}[Born-Type Rule at Operational Level]
Under axioms NSS, NS, and $\sigma$-affinity, the unique admissible probability rule of the form $P = \Phi \circ \tau$ is:
\[
P(\phi|\psi) = \tau(\psi, \phi) \quad \forall \psi, \phi \in \Omega_{\mathrm{pure}}.
\]
\end{corollary}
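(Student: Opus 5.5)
The corollary splits into a uniqueness claim and an admissibility claim. I will derive uniqueness from Theorem~\ref{thm:main} and check admissibility directly.

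\emph{Uniqueness.} Suppose $P=\Phi\circ\tau$ is an admissible rule with $\Phi$ continuous and non-decreasing, $\Phi(0)=0$ and $\Phi(1)=1$. Theorem~\ref{thm:main} then gives $\Phi=\mathrm{id}$, so $P(\phi|\psi)=\tau(\psi,\phi)$ on $\Omega_{\mathrm{pure}}$.

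Step~1 of that theorem is the weakest link. A piecewise-affine $\Phi$ is neither strictly convex nor strictly concave on any interval, yet it need not be globally affine. I therefore want a route to uniqueness that does not depend on that step, and would rerun the steering construction of Proposition~\ref{prop:signaling} without any convexity hypothesis:
\begin{enumerate}
\item Fix any $p_1,p_2\in[0,1]$ and $\lambda\in(0,1)$.
\item Use the two-level richness supplied by NS to choose pure states $\psi_1,\psi_2,\phi$ with $\tau(\psi_i,\phi)=p_i$.
\item Compare the split ensemble $\lambda\delta_{\psi_1}+(1-\lambda)\delta_{\psi_2}$ with the direct preparation of $\omega=\lambda\psi_1+(1-\lambda)\psi_2$.
\end{enumerate}
NSS forces Bob's statistics for $e_\phi$ to coincide in the two cases. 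Combined with $\sigma$-affinity and the affinity of $\tau(\cdot,\phi)$ on the face spanned by $\psi_1,\psi_2$, this yields
\[
\lambda\Phi(p_1)+(1-\lambda)\Phi(p_2)=\Phi\bigl(\lambda p_1+(1-\lambda)p_2\bigr)
\]
for all $p_1,p_2,\lambda$. Already the case $\lambda=\tfrac12$ is Jensen's functional equation. Its continuous solutions on $[0,1]$ are exactly the affine maps, by the standard dyadic-midpoint density argument plus continuity. The boundary values then fix $\Phi(p)=p$.

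\emph{Admissibility.} I must also show that $\Phi=\mathrm{id}$ actually satisfies the axioms; otherwise uniqueness is vacuous. $\sigma$-affinity holds by definition of the extension to mixed states, since $\int\tau(\psi,\phi)\,d\mu(\psi)$ is linear in $\mu$. For NSS, take any two normal ensembles $\mu_1,\mu_2$ with the same barycenter $\omega_B$. The plan is to show
\[
\int\tau(\psi,\phi)\,d\mu_1=\int\tau(\psi,\phi)\,d\mu_2.
\]
The natural way is to identify $\tau(\cdot,\phi)$ with evaluation of a single normal effect $e_\phi$, namely the optimizer, or a limit of $\varepsilon$-optimizers, in the definition of $\tau$. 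Both integrals then equal $e_\phi(\omega_B)$ by the definition of a normal ensemble. Hence Bob's marginals are independent of Alice's choice, and the steering calculation of Proposition~\ref{prop:signaling} closes with equality.

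\emph{Main obstacle.} The delicate step is exactly this identification: that $\psi\mapsto\tau(\psi,\phi)$ is the restriction of one affine normal effect, i.e.\ that the supremum is attained uniformly in $\psi$. The same fact is used in the affinity of $\tau$ on two-level faces. I would first try to obtain it from the sharp-test hypothesis of the lemma on basic properties of $\tau$, item~(3), taking $e_\phi$ from a perfectly distinguishing pair $(\phi,\phi^\perp)$. Failing that, I would add it as an explicit standing assumption, noting that in the von Neumann algebra setting it holds with $e_\phi$ the support projection of $\phi$.
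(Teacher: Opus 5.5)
Your argument is sound modulo the assumption you isolate yourself, and at the decisive step it takes a different and better route than the paper. The paper gets the corollary as a restatement of Theorem~\ref{thm:main}. Its Step~1 passes from ``not strictly convex or concave on any interval'' (Proposition~\ref{prop:signaling}) to ``affine on a dense set''. As you suspected, that implication is false: $\Phi(p)=\max(0,2p-1)$ satisfies every hypothesis on $\Phi$ and is not affine. So the paper's chain does not actually close.

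Your route works because the steering construction never used convexity. It therefore gives the Jensen identity for every realizable triple $(p_1,p_2,\lambda)$. The paper's detour through strict convexity buys an intuitive ``Jensen gap'' witness but nothing logical, since any violation of the identity is already a signaling witness.

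You can shorten your route further. Take $\psi_1=\phi$ and $\psi_2=\phi^\perp$, so $p_1=1$ and $p_2=0$. The identity then reads $\Phi(\lambda)=\lambda$ for every $\lambda$. This needs no continuity and no functional-equation theory, and richness is needed only for one distinguishable pair.

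Your admissibility check is also stronger than the paper's Step~4, which only re-runs one two-ensemble computation. Note that the equal-integrals property you prove there is exactly what makes the $\sigma$-affine extension of $P$ to mixed states well defined. So $\sigma$-affinity is not free ``by construction'' either: linearity in $\mu$ alone does not make the extension a function of $\omega$.

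There are three caveats, all inherited from the paper.

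First, your first attempt via item~(3) of the lemma fails for $\tau$ as literally defined. The unit effect satisfies $u(\phi)=1$, so the supremum is always $1$ and $\tau\equiv 1$. Your richness step then becomes impossible, and the corollary degenerates: every $\Phi$ gives $P\equiv 1=\tau$. The lemma's proof, the quantum identification and your support-projection remark all use the \emph{smallest} effect accepting $\phi$, i.e.\ $\tau(\psi,\phi)=\inf\{e(\psi): e(\phi)=1\}$. You should adopt that reading explicitly.

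Second, an infimum of affine functions is only concave, and the paper never shows it is attained by a single effect. In general GPTs it is not. In the square bit with $\phi$ a vertex, the other three vertices all have $\tau=0$, so no affine effect attains the infimum at every pure state. The two diagonal decompositions of the centre then give averages of $\tau$ equal to $1/2$ and $0$. So your fallback of making the single-effect property a standing assumption is the right call. It is exactly what the paper uses silently when it asserts that $\tau$ is affine on two-level faces.

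Third, the ``direct preparation'' $\delta_\omega$ is not a normal ensemble on $\Omega_{\mathrm{pure}}$, and $\Phi(\tau(\omega,\phi))$ applies the pure-state rule to a mixed state. Your step~3 is cleaner if you compare the split ensemble with a second, genuinely pure ensemble of $\omega$ whose members all have $\tau(\cdot,\phi)=\bar{p}$. On a Bloch-ball face such an ensemble exists: two points on the latitude circle through $\omega$.
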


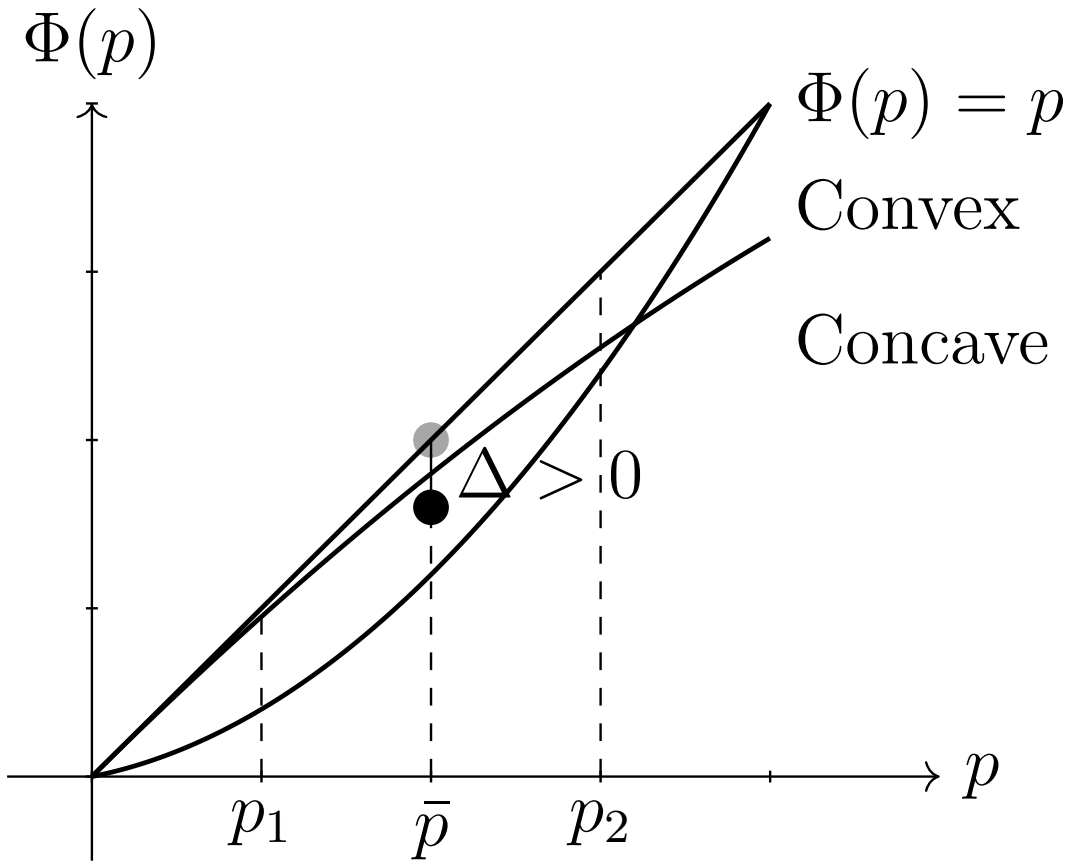
\begin{figure}[htbp]
\centering
\resizebox{0.90\textwidth}{!}{%
\begin{tikzpicture}
\draw[->] (-0.5,0) -- (5,0) node[right] {$p$};
\draw[->] (0,-0.5) -- (0,4) node[above] {$\Phi(p)$};

\draw[thick] (0,0) -- (4,4);
\node[right] at (4,4) {$\Phi(p)=p$};

\draw[thick, domain=0:4, samples=100] plot (\x, {0.2*\x + 0.8*\x*\x/4});
\node[right] at (4,3.4) {Convex};

\draw[thick, domain=0:4, samples=100] plot (\x, {0.8*\x + 0.2*\x*(\x-4)/-4});
\node[right] at (4,2.6) {Concave};

\draw[dashed] (1,0) node[below] {$p_1$} -- (1,1);
\draw[dashed] (3,0) node[below] {$p_2$} -- (3,3);
\draw[dashed] (2,0) node[below] {$\bar{p}$} -- (2,2);

\fill[opacity=0.35] (2,2) circle (3pt);
\fill (2,1.6) circle (3pt);
\draw[dashed] (2,1.6) -- (2,2) node[midway, right] {$\Delta>0$};

\foreach \x in {0,1,2,3,4}
  \draw (\x,1pt) -- (\x,-1pt);

\foreach \y in {0,1,2,3,4}
  \draw (1pt,\y) -- (-1pt,\y);
\end{tikzpicture}%
}
\caption{\textbf{Functional Forms and Signaling.}
The identity $\Phi(p)=p$ corresponds to the Born rule. Convex or concave deviations create a Jensen gap $\Delta$ between $\Phi(\lambda p_1+(1-\lambda)p_2)$ and $\lambda\Phi(p_1)+(1-\lambda)\Phi(p_2)$; via steering this becomes operationally observable, enabling signaling.}
\label{fig:jensen-gap}
\end{figure}

\section{Connection to Infinite-Dimensional Quantum Theory}
\label{sec:reconstruction}

We now demonstrate how the operational result connects to standard infinite-dimensional quantum mechanics through the mathematical framework of von Neumann algebras.

\subsection{Von Neumann Algebraic Representation}

In standard quantum theory, observables of a system form a von Neumann algebra $\cM$ acting on a Hilbert space $\cH$. Normal states correspond to $\sigma$-additive positive linear functionals $\omega: \cM \to \mathbb{C}$ with $\omega(\one) = 1$.

\begin{definition}[Normal State Space of a von Neumann Algebra]
For a von Neumann algebra $\cM$, the normal state space is:
\[
\cS_n(\cM) = \{\omega: \cM \to \mathbb{C} \mid \omega \text{ is positive, linear, normal, and } \omega(\one) = 1\}.
\]
Effects correspond to positive operators $E \in \cM$ with $0 \leq E \leq \one$.
\end{definition}

\begin{theorem}[GNS Representation \cite{Takesaki1979}]
Every normal state $\omega \in \cS_n(\cM)$ induces a cyclic representation $(\pi_\omega, \cH_\omega, \Omega_\omega)$ such that:
\[
\omega(A) = \langle \Omega_\omega | \pi_\omega(A) | \Omega_\omega \rangle \quad \forall A \in \cM.
\]
\end{theorem}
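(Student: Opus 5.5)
The plan is the standard Gelfand--Naimark--Segal construction, applied to the normal state $\omega\in\cS_n(\cM)$. I add one extra check to show that normality of $\omega$ passes to the representation. First, on the algebra $\cM$ viewed as a complex vector space, I define the sesquilinear form $\langle A,B\rangle_\omega:=\omega(A^*B)$. Positivity of $\omega$ makes it positive semidefinite. It therefore satisfies the Cauchy--Schwarz inequality $|\omega(A^*B)|^2\le\omega(A^*A)\,\omega(B^*B)$, which is proved in the usual way by expanding $\omega((A+\lambda B)^*(A+\lambda B))\ge 0$. Next I introduce the null set $N_\omega:=\{A\in\cM:\omega(A^*A)=0\}$. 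Cauchy--Schwarz shows that $A\in N_\omega$ if and only if $\omega(B^*A)=0$ for all $B$. Hence $N_\omega$ is a linear subspace, and since $\omega((CA)^*CA)=\omega(A^*(C^*CA))$, it is a left ideal. The quotient $\cM/N_\omega$ then carries a genuine inner product, and I let $\cH_\omega$ be its Hilbert space completion. I write $[A]$ for the class of $A$.

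Second, I define $\pi_\omega(A)[B]:=[AB]$ on the dense subspace $\cM/N_\omega$. This is well defined precisely because $N_\omega$ is a left ideal. The key estimate is boundedness. Since $A^*A\le\|A\|^2\one$ in the C$^*$-algebra $\cM$, conjugation gives $B^*A^*AB\le\|A\|^2B^*B$. Applying $\omega$ yields $\|[AB]\|^2\le\|A\|^2\|[B]\|^2$. So $\pi_\omega(A)$ extends uniquely to a bounded operator with $\|\pi_\omega(A)\|\le\|A\|$. Linearity, multiplicativity and the identity $\pi_\omega(A^*)=\pi_\omega(A)^*$ are immediate on the dense subspace, where $\langle[C],\pi_\omega(A)[B]\rangle=\omega(C^*AB)=\langle\pi_\omega(A^*)[C],[B]\rangle$, and they pass to the closure. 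Setting $\Omega_\omega:=[\one]$, the vector is cyclic because $\pi_\omega(\cM)\Omega_\omega=\{[A]\}$ is dense by construction. It also reproduces the state, since $\langle\Omega_\omega|\pi_\omega(A)|\Omega_\omega\rangle=\omega(\one^*A\one)=\omega(A)$.

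The step I expect to need the most care is one that the bare statement does not require but that the paper's use of it does: $\pi_\omega$ is normal, i.e.\ $\sigma$-weakly continuous, so that it respects the $\sigma$-additive structure emphasised earlier. My plan is to reduce to vector functionals on a dense set. For $B,C\in\cM$ the functional $A\mapsto\langle[C],\pi_\omega(A)[B]\rangle=\omega(C^*AB)$ is $\sigma$-weakly continuous, because $\omega$ is normal and left and right multiplication are $\sigma$-weakly continuous. Now take a bounded increasing net $A_\alpha\nearrow A$. Monotone convergence of $\omega(B^*A_\alpha B)$ gives $\pi_\omega(A_\alpha)\to\pi_\omega(A)$ weakly on the dense set $\{[B]\}$, and uniform boundedness extends this to all of $\cH_\omega$. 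This is exactly normality of $\pi_\omega$. Finally I would note uniqueness up to unitary equivalence: the map $\pi(A)\xi\mapsto\pi_\omega(A)\Omega_\omega$ is isometric by the state identity, which identifies any other cyclic triple with this one.
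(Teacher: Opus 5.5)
Your proposal is correct and is the standard GNS construction: the positive sesquilinear form $\omega(A^*B)$, quotient by the left ideal $N_\omega$, left multiplication bounded via $B^*A^*AB\le\|A\|^2B^*B$, and cyclic vector $[\one]$. The paper gives no proof of its own and simply cites Takesaki for exactly this construction. Your extra normality check for $\pi_\omega$ is also sound but not required by the statement; it uses monotone nets, $\sigma$-weak continuity of $A\mapsto\omega(C^*AB)$, and uniform boundedness to pass from the dense set to all of $\cH_\omega$.
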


\subsection{Operational Transition Probability Becomes Born Rule}

We now show that in the quantum case, $\tau$ coincides with the standard quantum transition probability.

\begin{proposition}[Quantum Identification of $\tau$]
Let $\cM$ be a von Neumann algebra, and let $\psi, \phi$ be vector states induced by unit vectors $|\psi\rangle, |\phi\rangle$ in a Hilbert space $\cH$. Then:
\[
\tau(\psi, \phi) = |\langle \phi | \psi \rangle|^2.
\]
\end{proposition}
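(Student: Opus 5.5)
The plan is to work directly from the definition of $\tau$ in Section~\ref{sec:tau}, specialised to $\cM$. Normal effects are the operators $E\in\cM$ with $0\le E\le\one$, and the condition $e(\phi)=1$ reads $\langle\phi|E|\phi\rangle=1$. Before anything else, the extremal problem has to be fixed. As literally written with a \emph{supremum}, the unit effect $E=\one$ is always admissible, so the supremum is identically $1$. The intended quantity, matching the intuition of the ``most demanding'' test that accepts $\phi$ with certainty and the Quantum Specialization remark, is the \emph{infimum} of $\langle\psi|E|\psi\rangle$ over such $E$. I would therefore prove the statement for
\[
\tau(\psi,\phi)=\inf\{\langle\psi|E|\psi\rangle : E\in\cM,\ 0\le E\le\one,\ \langle\phi|E|\phi\rangle=1\},
\]
and record this correction explicitly.

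\textbf{Step 1 (structure of admissible effects).} Suppose $0\le E\le\one$ and $\langle\phi|E|\phi\rangle=1$. Then
\[
\|(\one-E)^{1/2}\phi\|^2=\langle\phi|(\one-E)|\phi\rangle=0,
\]
so $E\phi=\phi$. Let $P_\phi=|\phi\rangle\langle\phi|$. Since $E$ is self-adjoint and $E\phi=\phi$, $E$ commutes with $P_\phi$, and hence
\[
E=P_\phi+(\one-P_\phi)E(\one-P_\phi),
\]
where the second term is positive.

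\textbf{Step 2 (lower bound).} From Step~1,
\[
\langle\psi|E|\psi\rangle\ \ge\ \langle\psi|P_\phi|\psi\rangle=|\langle\phi|\psi\rangle|^2
\]
for every admissible $E$.

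\textbf{Step 3 (attainment).} If $P_\phi\in\cM$, then $P_\phi$ is itself admissible, the bound of Step~2 is attained, and $\tau(\psi,\phi)=|\langle\phi|\psi\rangle|^2$.

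The main obstacle is Step~3 for a general von Neumann algebra $\cM$, because the rank-one projection $P_\phi$ need not lie in $\cM$. In that case I would replace $P_\phi$ by the support projection $s_\cM(\phi)$ of the vector state $\omega_\phi$ restricted to $\cM$. This is the smallest projection in $\cM$ with $\omega_\phi(s)=1$, namely the projection onto $\overline{\cM'\phi}$. Steps~1--2 then go through within $\cM$: an admissible $E$ satisfies $E\ge s_\cM(\phi)$, because $E$ fixes $\cM'\phi$ pointwise (indeed $E\phi=\phi$ and $E$ commutes with $\cM'$). This gives
\[
\tau(\psi,\phi)=\langle\psi|s_\cM(\phi)|\psi\rangle .
\]
That expression equals $|\langle\phi|\psi\rangle|^2$ exactly when $s_\cM(\phi)=P_\phi$, for instance when $\cM=B(\cH)$ or more generally when $P_\phi\in\cM$. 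I would therefore state the proposition under that hypothesis, which covers the type~I factor case used for standard Born-rule measurements. For the general case I would note that one obtains the support-projection formula, or Uhlmann's transition probability if one passes to the GNS/standard form, rather than the bare overlap.
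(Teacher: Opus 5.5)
Your argument is essentially the paper's: both deduce $E\phi=\phi$ from $\langle\phi|(\one-E)|\phi\rangle=0$ and conclude that $|\phi\rangle\langle\phi|$ is the extremal admissible effect. The paper gets there by expanding $|\psi\rangle=\alpha|\phi\rangle+\beta|\phi^\perp\rangle$, and you get there via the equivalent operator inequality $E\ge|\phi\rangle\langle\phi|$. Both of your corrections are right and in fact repair the paper's proof, which shows $\langle\psi|E|\psi\rangle\le 1$ with equality attainable (so the literal supremum is $1$) yet then selects the minimal effect $|\phi\rangle\langle\phi|$---i.e.\ it really computes your infimum, as does its proof of the additivity property of $\tau$---and which tacitly assumes $|\phi\rangle\langle\phi|\in\cM$ (false in general: for $\cM=\mathbb{C}\one$ every vector state is the same state and $\tau\equiv 1$), so your support-projection formula $\tau(\psi,\phi)=\langle\psi|s_{\cM}(\phi)|\psi\rangle$ is the correct general statement, reducing to $|\langle\phi|\psi\rangle|^2$ exactly when $|\phi\rangle\langle\phi|\in\cM$.
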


\begin{proof}
The constraint $e(\phi) = 1$ for an effect $E \in \cM$ with $0 \leq E \leq \one$ implies $\langle \phi | E | \phi \rangle = 1$. Since $E \leq \one$, we have:
\[
\langle \phi | (\one - E) | \phi \rangle = 0 \Rightarrow (\one - E)^{1/2} |\phi\rangle = 0 \Rightarrow E|\phi\rangle = |\phi\rangle.
\]
Thus $E$ acts as the identity on the subspace spanned by $|\phi\rangle$. The acceptance probability for $|\psi\rangle$ is $\langle \psi | E | \psi \rangle$. Writing $|\psi\rangle = \alpha |\phi\rangle + \beta |\phi^\perp\rangle$ with $\langle \phi | \phi^\perp \rangle = 0$, we have:
\[
\langle \psi | E | \psi \rangle = |\alpha|^2 \langle \phi | E | \phi \rangle + |\beta|^2 \langle \phi^\perp | E | \phi^\perp \rangle + 2\operatorname{Re}(\alpha^* \beta \langle \phi | E | \phi^\perp \rangle).
\]
Since $E|\phi\rangle = |\phi\rangle$ and $E \geq 0$, the cross term vanishes: $\langle \phi | E | \phi^\perp \rangle = \langle \phi | \phi^\perp \rangle = 0$. Thus:
\[
\langle \psi | E | \psi \rangle = |\alpha|^2 + |\beta|^2 \langle \phi^\perp | E | \phi^\perp \rangle \leq |\alpha|^2 + |\beta|^2 = 1,
\]
with equality achieved when $\langle \phi^\perp | E | \phi^\perp \rangle = 1$, which implies $E|\phi^\perp\rangle = |\phi^\perp\rangle$ as well. The minimal such $E$ (giving the largest possible value for $|\beta|^2 \langle \phi^\perp | E | \phi^\perp \rangle$) is $E = |\phi\rangle\langle\phi|$, yielding:
\[
\tau(\psi, \phi) = \sup_E \langle \psi | E | \psi \rangle = |\langle \phi | \psi \rangle|^2.
\]
\end{proof}

\begin{corollary}[Born Rule in Infinite Dimensions]
In any operational theory satisfying NSS, NS, and $\sigma$-affinity that admits representation by normal states on a von Neumann algebra, the unique causally consistent probability rule is:
\[
P(\phi|\psi) = |\langle \phi | \psi \rangle|^2 \quad \text{(for pure states)}.
\]
For mixed states $\rho$ and POVM measurements $\{E_i\}$:
\[
P(i|\rho) = \operatorname{Tr}(\rho E_i).
\]
\end{corollary}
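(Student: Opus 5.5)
The plan is to chain together three earlier results: Theorem~\ref{thm:main} (and its Born-type corollary), the quantum identification of $\tau$, and the $\sigma$-affinity axiom. The pure-state statement comes first, and the mixed-state/POVM statement then follows by a linear extension argument.

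\emph{Pure states.} Suppose the theory is represented by normal states on a von Neumann algebra $\cM$. Then its normal pure states are vector states in the GNS representation of the preceding theorem. By the Born-type corollary, any admissible rule of the form $\Phi\circ\tau$ must equal $P(\phi|\psi)=\tau(\psi,\phi)$. The Quantum Identification proposition gives $\tau(\psi,\phi)=|\langle\phi|\psi\rangle|^2$ for vector states. Composing the two yields the first display.

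\emph{Mixed states and rank-one effects.} Let $\rho$ be a normal state. Since the predual of $\cM\subseteq B(\cH)$ consists of trace-class densities, I would spectrally decompose $\rho=\sum_k p_k|\psi_k\rangle\langle\psi_k|$. This is a countable decomposition, and hence a normal ensemble $\mu=\sum_k p_k\delta_{\psi_k}$. Applying $\sigma$-affinity to the pure-state result gives
\[
P(\phi|\rho)=\sum_k p_k|\langle\phi|\psi_k\rangle|^2=\operatorname{Tr}(\rho\,|\phi\rangle\langle\phi|),
\]
and the series converges because $\rho$ is trace class.

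\emph{General POVM elements.} For a general effect $E_i$, I would identify $P(i|\psi)$ for pure $\psi$ with the supremum characterization of acceptance probabilities. For a PVM element $E_i$ I would write $E_i=\sum_n|\phi_n\rangle\langle\phi_n|$ in an orthonormal basis of its range, and use property~(3) of the $\tau$-lemma (additivity over perfectly distinguishable outcomes), extended countably via $\sigma$-additivity of measurements. This gives
\[
P(i|\psi)=\sum_n|\langle\phi_n|\psi\rangle|^2=\langle\psi|E_i|\psi\rangle .
\]
For genuine POVMs I would pass to a Naimark dilation $E_i=V^*Q_iV$ with a PVM $\{Q_i\}$ on an enlarged space. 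The PVM formula applied to $V\psi$ gives $\langle\psi|E_i|\psi\rangle$, and $\sigma$-affinity then yields $\operatorname{Tr}(\rho E_i)$.

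The main obstacle is the POVM step. The operational framework defines $P$ only through $\Phi\circ\tau$ on pure-state pairs, so one must justify that a general effect's probability is determined by additivity over sharp rank-one decompositions. It must also be shown that the dilation stays within the theory, meaning the ancilla-extended system is again a normal von Neumann algebra model satisfying NS. I would handle this by assuming closure of the theory under tensoring with a type~I ancilla and invoking $\sigma$-additivity of measurements (the definition of measurement in Section~\ref{sec:framework}) to pass from finite to countable sums.
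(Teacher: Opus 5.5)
Your pure-state step is exactly the paper's route. The paper gives no separate proof of this corollary. It gets the first display by composing the operational Born-type corollary ($P=\tau$) with the Quantum Identification of $\tau$. It effectively reads the second display off the representation $\omega(E)=\operatorname{Tr}(\rho_\omega E)$ of normal states on $B(\cH)$.

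Where you go further and try to \emph{derive} the POVM formula from the pure-state rule, there is a genuine gap at the additivity step. The rule $P=\Phi\circ\tau$ and the axioms NSS, NS and $\sigma$-affinity constrain $P$ only on outcomes labelled by pure states $\phi$, i.e. the minimal effects $e_\phi$ accepting $\phi$. Nothing in them fixes $P(E|\psi)$ for a higher-rank projection or a non-projective effect.
\begin{itemize}
\item Property~(3) of the $\tau$-lemma does not give $P(E_i|\psi)=\sum_n P(\phi_n|\psi)$. It is a two-outcome normalization identity, and its proof only compares effects on the face generated by $\phi,\phi^\perp$. In $B(\cH)$ with $\dim\cH\geq 3$, the sharp test $\{|\phi\rangle\langle\phi|,\one-|\phi\rangle\langle\phi|\}$ satisfies its hypothesis, yet its conclusion fails for every $\psi\notin\mathrm{span}\{\phi,\phi^\perp\}$.
\item The $\sigma$-additivity in the definition of a measurement is the statement $\sum_i e_i=u$ about effect functionals. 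It is not a property of $P$.
\item Closure under type~I ancillas only places the Naimark dilation inside the theory. It does not show that outcome $i$ of $\{E_i\}$ on $\psi$ receives the same $P$-value as outcome $i$ of $\{Q_i\}$ on $V\psi$.
\end{itemize}
A bridge between $P$ and arbitrary effects must therefore be postulated. One option is the GPT convention $P(e|\omega)=e(\omega)$, which gives the second display at once with no dilation. The other is measurement-noncontextual $\sigma$-additivity of $P$ over effects.

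Either bridge forces $\Phi=\mathrm{id}$ by itself. Under the first, $P(e_\phi|\psi)=e_\phi(\psi)=\tau(\psi,\phi)$. Under the second, together with $P(u|\psi)=1$, one gets $\sum_n\Phi(|\langle\phi_n|\psi\rangle|^2)=1$ for every orthonormal basis. For $\dim\cH\geq 3$ this gives Cauchy's equation $\Phi(a)+\Phi(b)=\Phi(a+b)$ for $a+b\leq 1$, hence $\Phi=\mathrm{id}$ by monotonicity. So the POVM line is additional input, not a consequence of the three axioms.

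Three further points affect your chain, and equally the paper's.
\begin{itemize}
\item The GNS theorem realizes each state as a vector in its own GNS space. It does not supply a common Hilbert space in which $\langle\phi|\psi\rangle$ is meaningful. Your argument and the Quantum Identification (whose proof needs $|\phi\rangle\langle\phi|\in\cM$) tacitly require $\cM\cong B(\cH)$ in its defining representation.
\item For a general $\cM\subseteq B(\cH)$, the eigenvectors of a density operator representing $\omega$ give vector states that need not be pure on $\cM$. Then $\sigma$-affinity, which is formulated for ensembles on $\Omega_{\mathrm{pure}}$, does not apply to your spectral decomposition. On type~III algebras (the AQFT case) there are no normal pure states at all. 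With $\cM\cong B(\cH)$ stated as a hypothesis, your mixed-state step for rank-one effects is correct.
\item The \emph{supremum characterization} you invoke cannot be used as written. The unit effect is always admissible, so $\sup\{e(\psi): e(\phi)=1\}=1$. The computation in the Quantum Identification actually establishes the infimum, attained at $E=|\phi\rangle\langle\phi|$, and that is the reading of $\tau$ your argument needs.
\end{itemize}
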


\section{Physical Implications and Experimental Signatures}
\label{sec:implications}

\subsection{Operational Test for Deviations from the Born Rule}

Our derivation suggests a direct experimental test for any proposed nonlinear modification of the Born rule in infinite-dimensional systems:

\begin{enumerate}
    \item \textbf{Prepare an entangled state} $\Psi_{AB}$ satisfying the normal steering condition.
    
    \item \textbf{Identify a nonlinearity} in the postulated probability rule $\Phi \neq \text{id}$.
    
    \item \textbf{Construct two steering measurements} on $A$ that prepare different ensembles $\mu_1, \mu_2$ of the same reduced state $\omega_B$ on $B$, with the ensembles chosen to maximize the Jensen gap $|\lambda \Phi(p_1) + (1-\lambda)\Phi(p_2) - \Phi(\lambda p_1 + (1-\lambda)p_2)|$.
    
    \item \textbf{Measure statistics} of an appropriate effect $e_\phi$ on $B$ for both steering choices.
    
    \item \textbf{Signal detection}: A statistically significant difference in $P(e_\phi)$ between the two steering scenarios indicates both a deviation from the Born rule and a violation of no-signaling (if the theory maintains the other axioms).
\end{enumerate}

\subsection{Why Normality is Physically Necessary}

The distinction between normal and singular states has direct operational significance:

\begin{itemize}
    \item \textbf{Approximation Procedures}: Any laboratory state preparation involves finite energy, finite time, and finite resolution, corresponding mathematically to taking limits of sequences of normal states. Singular states cannot be approximated in this manner.
    
    \item \textbf{Measurement Statistics}: Real measurements have finite precision and sample finite ensembles, corresponding mathematically to integration against normal effects.
    
    \item \textbf{Thermodynamic Limits}: The infinite-volume limit in statistical mechanics and quantum field theory is carefully controlled to yield normal states on quasi-local algebras \cite{Bratteli1997}.
\end{itemize}

The $\sigma$-affinity axiom captures precisely this physical regularity: probabilities must respect countable approximation schemes that correspond to actual experimental procedures.

\subsection{Implications for Post-Quantum Theories}

Our results place strong constraints on modifications of quantum theory in infinite-dimensional settings:

\begin{enumerate}
    \item Any theory proposing a nonlinear probability rule $P = \Phi(\tau)$ with $\Phi \neq \text{id}$ must either:
    \begin{itemize}
        \item Violate no-signaling (allowing superluminal communication), or
        \item Abandon normal steering (limiting entanglement-assisted remote preparation), or
        \item Reject $\sigma$-affinity (accepting pathological probability assignments).
    \end{itemize}
    
    \item Continuous-variable quantum mechanics and quantum field theory, which are fundamentally infinite-dimensional, are particularly constrained: any nonlinear modification would have observable signaling consequences in steering scenarios.
    
    \item The result provides a \emph{top-down} constraint: even without deriving the full Hilbert space structure, causal consistency alone forces the Born rule once normal steering and countable mixing regularity are assumed.
\end{enumerate}

\section{Limitations and Extensions}
\label{sec:limitations}

\subsection{Technical Assumptions and Their Physical Justification}

\begin{itemize}
    \item \textbf{Normal Steering Axiom}: This is the strongest assumption. While steering is experimentally verified in quantum systems \cite{Wittmann2012}, its extension to all normal ensembles in general operational theories is nontrivial. A weaker but sufficient condition would be steering for a dense set of two-level faces.
    
    \item \textbf{$\sigma$-Affinity}: This requirement aligns with the operational philosophy that only countable approximation procedures are physically realizable. However, some mathematical frameworks consider finitely additive measures; our result shows these lead to signaling if combined with steering.
    
    \item \textbf{Continuity of $\Phi$}: Physical probability assignments should be continuous in operational parameters. Discontinuous $\Phi$ would lead to unphysical instability under small perturbations.
\end{itemize}

\subsection{Extensions to Quantum Field Theory}

The framework naturally extends to algebraic quantum field theory (AQFT) \cite{Haag1996}:

\begin{itemize}
    \item Local observables form a net of von Neumann algebras
    $\{\cM(\mathcal{O})\}$ over spacetime regions.
    \item Physically realizable states are normal on each local algebra.
    \item The steering axiom corresponds to the existence of entangled states across spacelike separated regions.
    \item Our theorem then implies that within any local region, probabilities must follow the Born rule for measurements confined to that region.
\end{itemize}

\subsection{Connections to Other Derivations}

\begin{itemize}
    \item \textbf{Gleason-Type Theorems}: Our approach complements infinite-dimensional Gleason theorems \cite{Bunce1992} by emphasizing causal consistency rather than measure-theoretic structure.
    
    \item \textbf{Decision-Theoretic Approaches}: Like Deutsch \cite{Deutsch1999} and Wallace \cite{Wallace2002}, we derive probabilities from consistency requirements, but our argument works in infinite dimensions and uses steering rather than decision theory.
    
    \item \textbf{Operational Reconstructions}: Compared to finite-dimensional reconstructions \cite{Chiribella2010,Masanes2011}, we specifically address the infinite-dimensional challenges through $\sigma$-affinity and normal steering.
\end{itemize}

\section{Conclusion}
\label{sec:conclusion}

We established an operational rigidity statement for a broad class of probability assignments in infinite-dimensional operational theories. Under three operational requirements—no superluminal signaling (NSS), normal steering via purification (NS), and $\sigma$-affinity under countable preparation mixtures—any probability rule restricted to the form
\[
P(\phi|\psi)=\Phi(\tau(\psi,\phi))
\]
with continuous, non-decreasing $\Phi$ satisfying $\Phi(0)=0$ and $\Phi(1)=1$ is forced to satisfy $\Phi(p)=p$ on $[0,1]$. Equivalently, within this class, operational probabilities coincide with the operational transition probability $\tau$.

When the operational theory admits a representation in terms of normal states on a von Neumann algebra, the operational transition probability reduces to the standard quantum transition probability, and the resulting rule recovers the conventional Born form for both projective measurements and POVMs. In this sense, the Born rule appears as a causal fixed point: within $P=\Phi\circ\tau$, no-signaling together with normal steering and countable-mixing regularity excludes nonlinear distortions of $\tau$.

Several points delimit the scope of the result. First, the steering assumption is the strongest ingredient: the argument requires sufficient normal steering to compare distinct ensemble realizations of the same marginal state. Second, $\sigma$-affinity encodes a physical regularity requirement—stability under countable operational approximations—that rules out finitely additive, operationally unstable probability assignments. Third, our rigidity statement concerns the class $P=\Phi\circ\tau$ and does not claim that every conceivable probability postulate in infinite dimensions must reduce to Born form without additional structure.

These observations suggest a clear constraint on proposed post-quantum modifications in infinite-dimensional regimes: any model that introduces a nonlinear $\Phi\neq \mathrm{id}$ while retaining normal steering and $\sigma$-affinity generically enables operational signaling distinctions in steering scenarios, thereby conflicting with no-signaling. Future work includes weakening the steering requirement (e.g., to a dense family of two-level faces), sharpening the operational conditions under which $\tau$ behaves affinely on relevant faces, and exploring how the present rigidity mechanism interacts with algebraic quantum field theory settings where normality and locality play a central role.

\section*{Acknowledgments}
Parts of this manuscript were prepared with the assistance of a large language model (ChatGPT 5.2) to improve clarity, grammar, and organization of the exposition. All scientific content, including the formulation of assumptions, definitions, results, proofs, and the selection and verification of references, was produced and reviewed by the author, who takes full intellectual responsibility for the work.


\end{document}